\newtheorem{theorem}{Theorem}
\newtheorem*{problemA}{Problem A}
\newtheorem*{problemB}{Problem B}
\newcommand{\rt}{tour}
\begin{document}
\begin{frontmatter}
  \title{
  A historical note
  on the 3/2-approximation algorithm\\
  for
  the metric traveling salesman problem}

  \author[mmf]{René van Bevern\corref{cor1}}
  \ead{rvb@nsu.ru}
  
  \author[ii,gi]{Viktoriia A.\ Slugina}
  \ead{v.slugina@g.nsu.ru}

  \cortext[cor1]{Correspondence to: Novosibirsk State University, ul.\ Pirogova 1, Novosibirsk, 630090, Russian Federation.}

  \address[mmf]{Department of Mechanics and Mathematics,
    Novosibirsk State University, Novosibirsk, Russian Federation}

  \address[ii]{Institute of History of the Siberian Branch of the Russian Academy of Sciences, Novosibirsk, Russian Federation}
  \address[gi]{Humanities Institute, Novosibirsk State University, Novosibirsk, Russian Federation}
  
  \begin{abstract}
    One of the most fundamental results in %
    combinatorial optimization %
    is the polynomial\hyp time 3/2-approximation algorithm
    for the metric
    traveling salesman problem.
    It was
    presented
    by Christofides in 1976
    and is
    well known as ``the Christofides algorithm''.
    Recently,
    some authors started
    calling it
    ``Christofides-Serdyukov algorithm'',
    pointing out that
    it was published independently
    in the USSR in 1978.
    We provide some historic background
    on Serdyukov's findings
    and a translation of his article from Russian into English.

    \begin{otherlanguage}{ngerman}
      \medskip
      \noindent
      \textbf{Zusammenfassung}

      \medskip
      \noindent
      Eines der grundlegendsten Resultate
      auf dem Gebiet der kombinatorischen Optimierung
      ist der Polynomialzeit\hyp 3/2\hyp Approximationsalgorithmus
      für das metrische Problem des Handlungsreisenden.
      Er wurde 1976 von Christofides vorgestellt
      und ist bekannt unter dem Namen ,,Christofides-Algorithmus''.
      In letzter Zeit bezeichnen ihn einige Autoren
      als ,,Christofides\hyp Serdyukov\hyp Algorithmus''
      mit dem Hinweis,
      er sei 1978 unabhängig in der UdSSR publiziert worden.
      In diesem Artikel beleuchten wir
      den historischen Hintergrund
      um Serdyukovs Entdeckung
      und liefern eine Übersetzung seines Artikels aus dem Russischen ins Englische.
    \end{otherlanguage}

    \begin{otherlanguage}{russian}
      \medskip
      \noindent
      \textbf{Аннотация}

      \medskip
      \noindent
      Одним из самых фундаментальных результатов
      в области комбинаторной оптимизации является
      полиномиальный 3/2-приближённый алгоритм для
      метрической задачи коммивояжёра.
      Он был представлен Никосом Кристофидесом в 1976\,г.\
      и хорошо известен под названием <<алгоритм Кристофидеса>>.
      В последнее время
      некоторые авторы стали
      называть его <<алгоритмом Кристофидеса-Сердюкова>>,
      утверждая,
      что он был опубликован независимо в СССР в 1978\,г.
      В этой статье рассматривается
      исторический контекст появления результата А.\ И.\ Сердюкова
      и приводится перевод его статьи с русского на английский язык.
    \end{otherlanguage}
  \end{abstract}
  \begin{keyword}
    combinatorial optimization\sep
    Christofides algorithm\sep
    USSR\sep
    Novosibirsk Akademgorodok
    \MSC[2010]{90-03}
  \end{keyword}
\end{frontmatter}
\section{Introduction}
\noindent
One of the most fundamental problems
in combinatorial optimization
is the traveling salesman problem,
formalized as early as 1832
\citep[cf.][Chapter~1]{ABCC06}:
given $n$~cities
and their pairwise distances,
find a shortest \rt{}
to visit each city exactly once
and return to the starting point.

Finding the \emph{shortest} \rt{}
is a computationally intractable problem
even in the special case
where the distances between the cities
satisfy the triangle inequality \citep{GJ79}.
\citet{Chr76} 
presented an $O(n^3)$-time \emph{3/2\hyp approximation algorithm}
for this special case:
it yields a \rt{}
that is at most 3/2 times longer
than the shortest one.
It %
is
a prime example for approximation algorithms
that entered
textbooks
and encyclopedias as
``the Christofides algorithm''
or ``the Christofides heuristic''
\citep{GJ79,Chr79,Gut09,WS11,Bla16}.

Quite some efforts have been made
trying to improve it
\citep[cf.\ the surveys of][]{Vyg12,Sve13}.
One line of research aims for improving its running time:
there are many faster heuristics,
which
cannot guarantee 3/2-approximate solutions \citep{JM07},
yet $(3/2+\varepsilon)$-approximate solutions for
any~$\varepsilon>0$
are computable
by a randomized algorithm
in $O(n^2\log^4n/\varepsilon^2)$~time \citep{CQ17}.
Another line of research aims for improving the approximation factor,
which was successful only in special cases:
in polynomial time,
one can compute 8/7-approximate solutions if the distances are in~$\{1,2\}$
\citep{BK06},
7/5-approximate solutions if
the distances are lengths of shortest paths
in an unweighted graph \citep{SV14},
and $(1+\varepsilon)$\hyp approximate solutions
for any fixed~$\varepsilon>0$
if the cities are points in fixed\hyp dimensional
Euclidean \citep{Aro98,Mit99} or
doubling spaces \citep[using a randomized algorithm]{BGK16},
or if the distances are lengths of shortest paths
in a graph excluding some fixed minor \citep{DHK11}.

For general distances satisfying the triangle inequality,
the 3/2 approximation factor of \citeauthor{Chr76}' algorithm
remains the state of the art.\footnote{Actually,
  \citet{Wol80} showed
  that the length of the computed tours
  is within a factor 3/2
  not only of the optimum,
  but
  of a lower bound
  given by the optimal solution to a relaxation
  of an integer linear programming model.}
Recently,
a small but growing group of authors started referring
to it
as ``the Christofides-Serdyukov algorithm''
\citep{BDW98,DT10,BBMV19,GW19,SZ19,Tar19,TV19},%
\footnote{We deliberately omit articles coauthored by
  Serdyukov's former colleagues from this list.}
claiming that it was independently obtained
in the USSR by \citet{Ser78}.

\looseness=-1
At the one hand,
this claim is plausible:
in the beginning of the 70s,
a~lot of research on computationally intractable
problems was carried out parallely in the  USSR,
leading to independent proofs of seminal results
like the Cook-Levin theorem
about the NP\hyp completeness
of the satisfiability problem for Boolean formulas
\citep{Tra84}.
Moreover,
the submission date
given in
the journal article of \citet{Ser78},
January 27th, 1976,
predates
the report of \citet{Chr76}, dated February, 1976.
On the other hand,
such claims should be treated with caution:
for example,
the wide\hyp spread claim
that Kuratovski's theorem
was earlier proved in the USSR
has little support \citep{KQS85}.

We give some historic background
on \citeauthor{Ser78}'s findings,
which indeed supports the claim
of his independent discovery of the 3/2\hyp approximation algorithm
and sheds some light on the timely coincidence
of the publications of \citeauthor{Chr76} and \citeauthor{Ser78}.
We also provide a translation of \citeauthor{Ser78}'s article
in the appendix.

\section{Anatoliy I.\ Serdyukov (1951--2001)}
\noindent
The following information
about Serdyukov
can be found in \citet{BGS07}, \citet{Dem17},%
\footnote{The birth year 1952 given in the book edited by \citet{Dem17}
  is incorrect.}
the archives
of the Department of Mechanics and Mathematics
of Novosibirsk State University,
and the
State Public Scientific Technological Library
of the Siberian Branch of the Russian Academy of Sciences.

Anatoliy Ivanovich Serdyukov was born
on October 29th, 1951,
in Prokopyevsk,
a city in Kemerovo region (Western Siberia), USSR.
He graduated from
Novosibirsk State University in 1973,
after which
he was employed
in the structures of the Siberian Branch
of the Lenin Academy of Agricultural Sciences,
then at the
Institute of Cytology and Genetics
of the Siberian Branch of the Academy of Sciences of the USSR
(SB AS USSR),
and finally
at the Institute of Mathematics of the SB AS USSR
(now named the Sobolev Institute of Mathematics,
Siberian Branch of the Russian Academy of Sciences),
where he was working
until his death on February 7th, 2001.

In 1980,
already working at the Institute of Mathematics,
he was awarded the academic degree
of candidate of physico\hyp mathematical sciences.
\citeauthor{Ser80}'s \citeyearpar{Ser80} thesis
is on the complexity of finding Hamiltonian and Eulerian cycles in graphs.
His best known results are
approximation algorithms
for finding \emph{longest}
traveling salesman tours \citep[surveyed by][]{BGS07}.

Taking into account his graduation year
and the submission date of \citeauthor{Ser78}'s \citeyearpar{Ser78}
article, January 27th, 1976,
\citeauthor{Ser78}
must have obtained his 3/2\hyp approximation algorithm
as a young graduate student in about 1975.

\section{Circulation of Christofides' result between 1976 and 1979}
\noindent
Authors usually refer to \citeauthor{Chr76}' \citeyear{Chr76}
technical report at Carnegie-Mellon University (CMU)
as the source of the 3/2\hyp approximation algorithm
for the metric traveling salesman problem,
which some authors
do not consider as published
\citep[cf.][who also claims that ``Christofides never published his algorithm'']{Bla16}.

\looseness=-1
Apparently,
\citeauthor{Chr76}' technical report was not known
to a wide audience up to 1978. %
For example,
\citet{Kar77} and \citet{RSL77}
refer to Christofides' abstract
in the proceedings
of a symposium held at CMU in April 1976.
The proceedings
were published only in December 1976 \citep{Tra76}.
\citet{FHK76}
refer to the same abstract,
whereas later,
in the journal version of their article,
\citet{FHK78} refer to the technical report.
\citeauthor{Chr76}' technical report could have been popularized
in 1977,
when its abstract
stating the 3/2-approximation
was indexed by the NASA abstract journal
Scientific and Technical Aerospace Reports
\citep{Chr77}.

Some authors of that time,
for example \citet{LR79},
refer to a journal article of Christofides
that is to appear in the journal \emph{Mathematical Programming}.
In a combinatorial optimization textbook,
\citet{Chr79}
describes his algorithm
without proving the approximation factor,
referring to an article in press in \emph{Mathematical Programming}
for the proof,
not mentioning his technical report.
Interestingly,
according to the archives of \emph{Mathematical Programming},
his article was not published.
The algorithm with complete proof details
was published to a wide audience not later
than in the seminal textbook of \citet{GJ79}.

Summarizing,
\citet{Ser78} submitted his journal article in January 1976,
which predates all traces of \citeauthor{Chr76}' publications on this topic.
Thus,
it is plausible that \citet{Ser78}
obtained the result independently.

\section{Serdyukov's work between 1974 and 1978}
\noindent
\looseness=-1
We give some historic background
on the findings of \citeauthor{Ser78}
to shed some light on the
timely coincidence of
the publications
of \citet{Chr76} and \citet{Ser78}.
To this end,
it is helpful to interpret the 3/2-approximation algorithm
for the traveling salesman problem as follows:
A first step
computes a minimum\hyp cost spanning tree
that connects all the cities.
A second step
computes a shortest \rt{} in the input graph
that traverses the edges of the spanning tree.

The second step is solved
using an approach earlier developed for the closely
related Chinese postman problem
of computing a shortest \rt{}
traversing \emph{all} edges of a graph:
\citet{Chr73} and \citet{Ser74},
but also \citet{EJ73}, actively
study the Chinese postman problem at that time.
They all reduce it
to
the problem of finding a minimum\hyp cost perfect matching
on the
complete edge\hyp weighted graph
on all odd\hyp degree vertices of the input graph.%
\footnote{Notably,
\cite{Ser74} explicitly introduces
the  problem
that forty years later
is intensively studied as the Eulerian extension problem
\citep{SBNW11,HJM12,SBNW12,DMNW13,GWY17,BFTxx}.}
Surprisingly,
while 
\citeauthor{Chr73}, \citeauthor{EJ73}
solve the matching problem
using the polynomial\hyp time algorithm of \citet{Edm65b},
\citeauthor{Ser74}
reduces it
to an exponential number of matching problems in bipartite graphs.%
\footnote{In contemporary terms of parameterized complexity theory
  \citep[cf.][]{CFK+15},
  \cite{Ser74} merely describes a fixed\hyp parameter algorithm
for the Chinese postman problem
parameterized by the number of odd\hyp degree vertices in the input graph.}
Apparently,
in 1974
neither
\citeauthor{Ser74}
nor his reviewers
were aware
of the work of \citet{Chr73}, \citet{EJ73},
or
the
polynomial\hyp time
algorithm for computing maximum\hyp weight
matchings in general graphs,
published by \citeauthor{Edm65b} nine years earlier.

Since \citet{Ser78} uses \citeauthor{Edm65b}' algorithm
to solve the matching problem
in his 3/2\hyp approximation algorithm
for the traveling salesman problem
but was unaware of it in 1974,
he must have learned about \citeauthor{Edm65b}' algorithm
in 1974 or 1975.
One scenario is
that he learned about it
via the article of \citet{Chr73},
which \citet{Ser76} cites
in an article studying
reductions between matching, covering,
the Chinese postman, and the traveling salesman problems.
In this scenario,
\citeauthor{Ser78} obtained
his 3/2\hyp approximation
independently of \citeauthor{Chr73}
but because of him.
Another scenario is that
\citet{Ser78} learned about \citeauthor{Edm65b}' algorithm
from \citet{Kar76},
whose $O(n^3\log n)$\hyp time implementation
of \citeauthor{Edm65b}' algorithm
he uses in his 3/2\hyp approximation.
\citeauthor{Kar76}'s
article was probably not yet published in January 1976,
when \citeauthor{Ser78} submitted his article,
but he might have had access to a preliminary copy,
which is supported by the fact
that the titles given by \citeauthor{Ser78} and \citeauthor{Kar76}
differ slightly.

\section{Conclusion}
\noindent
Our findings support the claim that
\citet{Ser78} discovered the 3/2\hyp approximation algorithm
for the metric traveling salesman problem
independently of \citet{Chr76}.

Concerning the timely coincidence
of the publications of \citeauthor{Chr76} and \citeauthor{Ser78},
we conclude that,
on the one hand,
it was impossible for
\citeauthor{Ser78}
to find the algorithm
much earlier
than \citeauthor{Chr76},
being unaware of \citeauthor{Edm65b}'
polynomial-time matching algorithm
up to 1974.
On the other hand,
actively working on
the Chinese postman before,
he found the 3/2\hyp approximation
for the traveling salesman problem
as soon as he became aware of
\citeauthor{Edm65b}' algorithm.

\looseness=-1
An English abstract of \citeauthor{Ser78}'s \citeyearpar{Ser78}
article
was indexed in zbMATH only in 1982 \citep{Ser82}.
At~this time,
``the \citeauthor{Chr76} algorithm''
had already entered
fundamental textbooks like that of \citet{GJ79}.
Moreover,
the English abstract
does not mention any approximation factors.
Thus,
it is not surprising that \citeauthor{Ser78}'s result
remained largely unknown beyond the USSR.

\paragraph{Acknowledgments}
We thank Edward Kh.\ Gimadi and Oxana Yu.\ Tsidulko
for helpful input.

\paragraph{Funding}
René van Bevern is supported by the
Mathematical Center in Akademgorodok,
agreement No.\ 075-15-2019-1675 with the Ministry of Science and
Higher Education of the Russian Federation.
Viktoriia A.\ Slugina is supported by
grant No.\ 19-39-60006
of the Russian Foundation for Basic Research.

\bibliographystyle{tsp-history}
\bibliography{tsp-history}

\newpage
\appendix
\section*{Appendix}
\noindent
The following
is a translation of
the Russian article of 
A.\ I.\ Serdyukov,
O nekotorykh ekstremal’nykh obkhodakh v grafakh,
\emph{Upravlyaemye sistemy} 17:76--79, 1978.

\section*{On some extremal walks in graphs}
\noindent
A.\ I.\ Serdyukov

\bigskip
\noindent
\textbf{1.}\quad
Let $G=(X,U)$ be an undirected $n$-vertex graph
with edge weights $\rho_{ij},u_{ij}\in U,1\leq i,j\leq n$,
where the $\rho_{ij}$ are positive real numbers.
Moreover,
let there be a real number~$k\geq 1$.
By $\mathcal L$ we denote the set of Hamiltonian cycles
in the graph~$G$,
and by $\mathcal M$
the set of cycles
that contain all vertices of~$G$.
We consider two extremum problems.

\begin{problemA}
Find an element $L^*\in\mathcal L$ with the property
\[
  \rho(L^*)=\smashoperator{\sum_{u_{ij}\in L^*}}\rho_{ij}\leq k\cdot\min_{L\in\mathcal L}\rho(L).
\]
\end{problemA}
\noindent
Note that,
for $k=1$,
Problem A is nothing else but the traveling salesman problem.

\begin{problemB}
Find an element $M^*\in\mathcal M$ with the property
\[
    \rho(M^*)=\smashoperator{\sum_{u_{ij}\in M^*}}\rho_{ij}\leq k\cdot\min_{M\in\mathcal M}\rho(M).
\]
(herein we take into account the multiplicity
that each edge appears on the cycle).
\end{problemB}
\noindent
For $k=1$,
Problems A and B are NP\hyp complete~[1].
Moreover, as shown in [2],
Problem A is NP\hyp complete for arbitrary~$k\geq 1$.
In [3], an algorithm with polynomial running time
is presented for Problem A with $k=2$
in complete undirected graphs
whose edge weights respect the triangle inequality.

In this work, we present an algorithm for Problem A with $k=3/2$
in the very same class of graphs,
whose complexity is $O(n^3\ln n)$ operations.

Terminology related to graph theory used in this work
can be found in [4].

\bigskip
\noindent
\textbf{2.}\quad
Assume that the edge weights
of a complete undirected $n$-vertex graph~$G$
satisfy the triangle inequality:
\begin{align*}
  \label{eq:tri}
  \rho_{ij}\leq\rho_{ik}+\rho_{kj},\quad 1\leq i,j,k\leq n.
  \tag{1}
\end{align*}
We introduce some necessary notation:
\begin{compactitem}[$\bar G=(X_1,\bar U)$ ---]
\item[$L_0$ ---] a minimum Hamiltonian cycle in graph~$G$;
\item[$M_0$ ---] the shortest cycle containing all vertices of graph~$G$;
  
\item[$\mathcal D_0$ ---] the shortest spanning tree in graph~$G$;
  
\item[$X_1\subseteq X$ ---] the set of odd\hyp degree vertices
  in the tree~$\mathcal D_0$;
  
\item[$\bar G=(X_1,\bar U)$ ---] the complete subgraph of~$G$
  with vertex set~$X_1$;
\item[$\bar L_0$ ---] a minimum Hamiltonian cycle in~$\bar G$;
  
\item[$w_0$ ---] a minimum perfect matching in graph~$\bar G$.
\end{compactitem}
We prove the following theorem.

\begin{theorem}
  Independently of the edge weights of the input graph~$G$,
  which satisfy (1),
  the following relations hold:
  \begin{align*}
    \rho(L_0)&=\rho(M_0),\tag{2}\\
    \rho(L_0)&\geq\rho(\bar L_0),\tag{3}\\
    \rho(w_0)&\leq \frac12 \rho(\bar L_0).\tag{4}
  \end{align*}
\end{theorem}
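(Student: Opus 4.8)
The plan is to derive all three relations from one elementary device --- \emph{shortcutting}, i.e.\ replacing a subwalk between two vertices by the direct edge joining them, which by the triangle inequality~(1) never increases total weight --- together with the handshaking lemma. For~(2), every Hamiltonian cycle of $G$ lies in $\mathcal M$, so $\rho(M_0)\leq\rho(L_0)$ is immediate; conversely, given a cycle $M\in\mathcal M$, I would traverse it, list its vertices in order of first occurrence, and join consecutive listed vertices cyclically by direct edges to obtain a Hamiltonian cycle $L$. Since each new edge replaces a subwalk of $M$, iterating~(1) yields $\rho(L)\leq\rho(M)$; applying this to $M_0$ gives $\rho(L_0)\leq\rho(M_0)$, hence~(2). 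For~(3) I would shortcut $L_0$ itself, keeping only the vertices of $X_1$: between two cyclically consecutive such vertices on $L_0$, replace the intervening subpath by the direct edge, obtaining a Hamiltonian cycle of $\bar G$ of weight at most $\rho(L_0)$; minimality of $\bar L_0$ then gives $\rho(\bar L_0)\leq\rho(L_0)$.

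For~(4), I would first recall that $|X_1|$ is even, since in any graph --- in particular in the tree $\mathcal D_0$ --- the number of odd\hyp degree vertices is even. Hence the cycle $\bar L_0$ has an even number of edges, and these split into two perfect matchings $N_1, N_2$ of $\bar G$ by taking alternate edges along it. From $\rho(N_1)+\rho(N_2)=\rho(\bar L_0)$, one of them, say $N_1$, satisfies $\rho(N_1)\leq\tfrac12\rho(\bar L_0)$, and since $w_0$ is a minimum perfect matching of $\bar G$ we conclude $\rho(w_0)\leq\rho(N_1)\leq\tfrac12\rho(\bar L_0)$.

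I do not expect a real obstacle here; the only points needing care are the bookkeeping in the shortcutting argument --- that the retained edges correspond to edge\hyp disjoint subwalks and that~(1) may legitimately be iterated along each --- and the appeal to the handshaking lemma, without which $\bar L_0$ need not be an even cycle and its decomposition into two matchings would fail.
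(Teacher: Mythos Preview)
Your proposal is correct and follows essentially the same route as the paper: for~(3) and~(4) your arguments coincide with Serdyukov's almost verbatim (shortcutting $L_0$ to the vertices of~$X_1$, and splitting the even cycle $\bar L_0$ into two perfect matchings via the handshaking lemma). The only difference is that for~(2) the paper does not argue at all but simply cites an earlier article~[5], whereas you supply an explicit shortcutting proof; since this is the same device already used for~(3), the approaches are not genuinely different.
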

The proof of equality (2) can be found in [5]
(cf.\ Lemma 1).
Inequality (4) follows from the fact that
graph~$\bar G$ has an even number of vertices
(since the number of odd\hyp degree vertices in a graph is even [4])
and that a Hamiltonian cycle can be partitioned
into two edge\hyp disjoint perfect matchings.
We prove inequality (3).
To this end,
we write the Hamiltonian cycle~$L_0$
as a sequence of vertices:
\begin{align*}
  \{x_1,x_2,x_3,\dots,x_{n-1},x_n,x_1\}.\tag{5}
\end{align*}
If $X_1=X$,
then inequality (3) is proven.
Let $X_1\subsetneq X$.
In this case, there must be vertices
$x_i\in X_1,x_j\in X_1,i<j-1$
such that $x_{i+s}\notin X_1,1\leq s<j-i$.
Then,
replacing the path between vertices~$i$ and~$j$ in the Hamiltonian cycle~$L_0$
by the edge~$u_{ij}$,
we obtain a simple cycle
visiting all vertices in~$X_1$
with a weight not exceeding~$\rho(L_0)$.
This is possible since the graph~$G$ is complete
and (1) holds.
By executing this process for all pairs of such vertices
in the sequence (5),
we obtain a Hamiltonian cycle in the graph~$\bar G$
of weight not exceeding $\rho(L_0)$.
Theorem 1 is proved.

\bigskip
\noindent
\textbf{3.}
\quad
We now describe the algorithm for Problem A
in the class of complete undirected graphs
whose edge weights satisfy (1).
The algorithm consists of five steps.

\textit{First step.}
Compute a minimum spanning tree~$\mathcal D_0$ in the graph~$G$.
To this end,
one can use the algorithm of Prim~[6],
whose complexity is $O(n^2)$~operations.

\textit{Second step.}
Find all odd\hyp degree vertices in the tree~$\mathcal D_0$,
which form the set $X_1\subseteq X$.
Then build the subgraph~$\bar G=(X_1,\bar U)$ described above.

\textit{Third step.}
Find the perfect matching~$w_0$ in graph~$\bar G$.
To find such a perfect matching,
one can use the algorithm described
in [7],
whose complexity is $O(n^3\ln n)$~operations.
(In fact, the algorithm in [7] solves the problem
of finding a maximum\hyp weight matching in a graph.
To reduce the problem of finding a minimum\hyp weight
perfect matching in the graph~$\bar G$
to this problem,
it is enough to assign the edges of~$\bar G$
the weights $\rho^*_{ij}=2a-\rho_{ij},\bar u_{ij}\in\bar U$,
where $a=\max_{\bar u_{ij}\in\bar U}\rho_{ij}.)$
Note that the edge set $\mathcal D_0\cup w_0$
forms an Eulerian graph
(taking into account multi\hyp edges
that may appear when $\mathcal D_0\cap w_0\ne\emptyset$).

\textit{Fourth step.}
Find an Eulerian walk of all edges of the graph~$\mathcal D_0\cup w_0$
using the algorithm described in [4].

\textit{Fifth step.}
Write the Eulerian walk for the edges in the set $\mathcal D_0\cup w_0$
as a sequence of vertices
\begin{align}
  \{x_1,x_2,x_3,\dots,x_{k-1},x_1,x_k,\dots,x_1\}.\tag{6}
\end{align}
\looseness=-1
If there are no vertex repetitions in this sequence,
then $\mathcal D_0\cup w_0$
is a Hamiltonian cycle.
Then we set $L^*=D_0\cup w_0$.
In this case the algorithm stops.
Assume that some vertex,
without loss of generality~$x_1$,
is repeated in the sequence~(6).
Then we transform (6) into a cycle $\{x_1,x_2,x_3,\dots,x_{k-1},x_k,\dots,x_1\}$
without increasing its weight.
This is possible because $G$~is a complete graph
whose edge weights satisfy (1).
Executing this process
for each vertex of graph~$G$
as often as it is repeated in (6),
we obtain a Hamiltonian cycle~$L_1$.
Set $L^*=L_1$.
This concludes the description of the algorithm.

\begin{theorem}
  The weight of the Hamiltonian cycle~$L^*$
  constructed by the described algorithm
  in the graph~$G$ satisfies the inequality
  \begin{align*}
    \rho(L^*)<\frac32\cdot \rho(L_0).\tag{7}
  \end{align*}
\end{theorem}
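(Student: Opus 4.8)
The plan is to bound the weight of the Hamiltonian cycle $L^*$ by relating every quantity produced by the algorithm back to $\rho(L_0)$, using the three relations of Theorem~1 as the backbone. First I would observe that the fifth step only ever replaces a sub-walk by a single edge (or a shortcut), which by completeness of $G$ and the triangle inequality~(1) does not increase the weight; hence $\rho(L^*)\le\rho(\mathcal D_0\cup w_0)=\rho(\mathcal D_0)+\rho(w_0)$, the edge multiplicities being accounted for on the right-hand side exactly as in the Eulerian walk. So it suffices to bound $\rho(\mathcal D_0)+\rho(w_0)$ strictly below $\tfrac32\rho(L_0)$.

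Next I would handle the two summands separately. For the spanning tree, deleting any one edge from the optimal Hamiltonian cycle $L_0$ yields a spanning tree, so $\rho(\mathcal D_0)<\rho(L_0)$, the inequality being strict because $\rho(L_0)$ contains at least one more (positive-weight) edge than the tree. For the matching, I would chain relations~(4) and~(3) of Theorem~1: $w_0$ is a minimum perfect matching in $\bar G$, so $\rho(w_0)\le\tfrac12\rho(\bar L_0)\le\tfrac12\rho(L_0)$. (If the second step produces $X_1=\emptyset$, then $w_0=\emptyset$ and this bound is trivially $0$; the argument is unaffected.) Adding the two gives
\begin{align*}
  \rho(L^*)\le\rho(\mathcal D_0)+\rho(w_0)<\rho(L_0)+\tfrac12\rho(L_0)=\tfrac32\rho(L_0),
\end{align*}
which is~(7).

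The main obstacle — really the only place any care is needed — is making sure the strictness in $\rho(\mathcal D_0)<\rho(L_0)$ is legitimate and that the shortcutting step genuinely never increases weight when multi-edges are present. For the former, one uses that all $\rho_{ij}$ are strictly positive and that $n\ge 3$ (for $n\le 2$ the problem is degenerate), so $L_0$ has $n$ edges while $\mathcal D_0$ has $n-1$; dropping the heaviest edge of $L_0$ already gives a spanning tree of weight at most $\rho(L_0)$ minus a positive amount. For the latter, one should note that the Eulerian walk on $\mathcal D_0\cup w_0$ traverses each copy of a multi-edge separately, and each shortcut replaces a contiguous sub-sequence of vertices $x_a,\dots,x_b$ by the direct edge $x_ax_b$; iterating~(1) along that sub-sequence shows the new edge is no heavier than the sub-walk it replaces. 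Everything else is bookkeeping, and the strict inequality in~(7) is inherited from the strict inequality $\rho(\mathcal D_0)<\rho(L_0)$.
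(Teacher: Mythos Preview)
Your proposal is correct and follows essentially the same line as the paper: the paper's proof is the single sentence ``it is enough to notice that $\rho(L^*)\le\rho(\mathcal D_0)+\rho(w_0)<\rho(L_0)+\rho(w_0)$ and to use Theorem~1,'' and you have simply unpacked each of these three steps (shortcutting, strict tree bound, and the chain $(4)\Rightarrow(3)$) with more detail. One minor remark: the case $X_1=\emptyset$ you single out cannot actually occur, since any spanning tree on $n\ge 2$ vertices has at least two leaves, hence at least two odd-degree vertices.
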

\begin{proof}
  To prove this,
  it is enough to notice that
  $\rho(L^*)\leq\rho(\mathcal D_0)+\rho(w_0)<\rho(L_0)+\rho(w_0)$
  and to use Theorem~1.
\end{proof}

\noindent
\textbf{4.}
\quad
We now consider Problem B in an arbitrary connected graph~$\tilde G=(X,\tilde U)$,
the edges of which have weights~$\tilde\rho_{ij},\tilde u_{ij}\in\tilde U,1\leq i,j\leq n$,
where $\tilde\rho_{ij}$ are positive real numbers.
For $\tilde G$
we build the shortest\hyp path graph~$G=(X,U)$,
which is complete and whose edge weights satisfy (1)
(cf. [5]).
By $\tilde M_0$ denote the shortest cycle
containing all vertices of~$\tilde G$.
Then we have the following equality [5]:
\begin{align*}
  \tilde\rho(\tilde M_0)=\rho(L_0).\tag{8}
\end{align*}
By replacing each edge $u_{ij}\in L^*$
by a shortest path between the vertices~$i$ and~$j$ in the graph~$\tilde G$,
we find some cycle~$\tilde M^*$
containing all vertices of graph~$\tilde G$,
whose weight equals $\rho(L^*)$.
Moreover,
taking into account (7), (8),
we obtain the following inequality:
\[
  \tilde \rho(\tilde M^*)< \frac32\cdot \tilde\rho(\tilde M_0).
\]
This last inequality means that one can use $\tilde M^*$
as a solution to Problem B in the graph~$\tilde G$.

For the complexity analysis of the algorithm,
it is enough to note
that each of its five steps requires no more than $Cn^3\ln n$ operations.
All subsequent computations related to finding a solution for Problem~B
can be executed using the algorithm of Dijkstra [8] in $O(n^3)$~operations.
Thus,
the number of operations required for realizing the described algorithms
is proportional to $n^3\ln n$.

\hfill Received by the editorial-publishing office

\hfill January 27, 1976
\subsection*{Literature}

\begin{compactenum}[1.]
\item  Karp R.\ M.,  Slozhnost' resheniya ekstremal'nykh kombinatornykh problem. Kiberneticheskii sbornik, volume 12, Moscow, Mir, 1975.
  
\item Sahni S.\ and Gonzales T., P-complete approximation problems. Journal of Association for Computing Machinery, July 1976, No.\ 3, volume 23, pp.\ 555--565.

\item Rosenkrantz D.\ J., Stearns R.\ E.\ and Lewis P.\ M.,
  Approximate algorithms for the travelling salesperson problem.
  15th Annual IEEE Symp.\ on Switching and Automata Theory,
  1974, pp.\ 33--42.

\item Berzh K., Teoriya grafov i ee prilozhenie. Moscow, IL, 1962.

\item Serdyukov A. I., O vzaimnoi svodimosti nekotorykh ekstremal'nykh zadach teorii grafov.  Upravlyaemye sistemy, No.\ 15, Novosibirsk, 1976, pp.\ 68--73.

\item Prim R.\ M.,
  Kratchaishie svyazyvayushchie seti i nekotorye obobshcheniya.
  Kiberneticheskii sbornik, volume 2, Moscow, Mir, 1961.

\item Karzanov A.\ V., Ekonomnye realizatsii algoritmov Edmondsa nakhozhdeniya parosochetanii maksimal'noi moshchnosti (maksimal'nogo vesa).
  In Issledovaniya po Diskretnoi Optimizatsii, Moscow, Mir, 1976,
  pp.\ 306--327.
  
\item Khu T., Tselochislennoe programmirovanie i potoki v setyakh.  Moscow, Mir, 1974.
\end{compactenum}
\end{document}